\journal{Information Processing Letters}
\begin{document}

\renewcommand{\thepart}{\Roman{part}}
\hyphenation{long-est sub-string insert-ancestor non-overlapping ex-tend-ible}

\newtheorem{theorem}{Theorem}
\newtheorem{lemma}{Lemma}
\newtheorem{proposition}{Proposition}
\newtheorem{defn}{Definition}
\newtheorem{example}{Example}
\newtheorem{corollary}{Corollary}

\newenvironment{dig}{\\ [6pt]\noindent {\bf Digression}}{~$\Box$\\ [6pt]\indent}
\newenvironment{dig1}{\noindent {\bf Digression}}{~$\Box$\\ [6pt]\indent}
\newenvironment{proof}{\noindent {\bf Proof}\ }{~$\Box$\\ [6pt]\indent}
\newtheorem{alg}{\hspace{1.3in} Algorithm}
\def\itbf#1{\textit{\textbf{#1}}}

\newif\ifj
\newif\ifComments
\Commentstrue

\def\s#1{\mbox{\boldmath $#1$}}
\def\S{\mathcal S}
\def\pref#1{\mbox{pref(\s{#1})}}
\def\suff#1{\mbox{suff(\s{#1})}}
\def\head{\mbox{\rm head}}
\def\tail{\mbox{\rm tail}}
\def\CFL{\mbox{\rm CFL}}
\def\NF{\mbox{\rm NF}}
\def\ST{\mbox{\rm ST}}
\def\SA{\mbox{\rm SA}}
\def\IN{\mbox{\rm IN}}
\def\LZ{\mbox{\rm LZ}}
\def\INC{\mbox{\rm INC}}
\def\LCS{\mbox{\rm LCS}}
\def\MSP{\mbox{\rm MSP}}
\def\LCA{\mbox{\rm LCA}}
\def\LCF{\mbox{\rm LCF}}
\def\LCP{\mbox{\rm LCP}}
\def\lcp{\mbox{\rm lcp}}
\def\POS{\mbox{\rm POS}}
\def\LEN{\mbox{\rm LEN}}
\def\UMFF{\mbox{\rm UMFF}}
\def\BELONGS{\mbox{\rm BELONGS}}
\def\pop{\mbox{\tt pop}}
\def\TRUE{\mbox{\tt TRUE}}
\def\FALSE{\mbox{\tt FALSE}}
\def\push{\mbox{\tt push}}
\def\lcs{\mbox{\rm lcs}}
\def\exp{\mbox{\rm exp}}
\def\prb{\mbox{\rm prob}}
\def\NDFA{\mbox{\rm NDFA}}
\def\DFA{\mbox{\rm DFA}}
\def\ACFA{\mbox{\rm ACFA}}
\def\CWFA{\mbox{\rm CWFA}}
\def\BYNFA{\mbox{\rm BYNFA}}
\def\NRE{\mbox{\rm NRE}}
\def\NE{\mbox{\rm NE}}
\def\E{\mbox{\rm E}}
\def\LOC{\mbox{\rm LOC}}
\def\BRANCH{\mbox{\rm BRANCH}}
\def\ov#1{\mbox{$\overline{\s{t_{#1}}}$}}
\def\OV#1{\mbox{$\overline{\lambda_{#1}}$}}
\def\t#1{\mbox{\boldmath $\scriptstyle#1$}}
\def\stirl#1#2{\mbox{$\genfrac{\{}{\}}{0pt}{}{#1}{#2}$}}
\def\stir#1#2{\genfrac{\{}{\}}{0pt}{}{#1}{#2}}
\def\binom#1#2{\genfrac{(}{)}{0pt}{}{#1}{#2}}
\def\comma{,\mbox{\hspace{4pt}}}
\def\semi{;\mbox{\hspace{4pt}}}
\def\:{\mbox{\ :\ }}
\def\floor#1{\lfloor #1 \rfloor}
\def\bfloor#1{\big\lfloor #1 \big\rfloor}
\def\Bfloor#1{\Big\lfloor #1 \Big\rfloor}
\def\ceil#1{\lceil #1 \rceil}
\def\bceil#1{\big\lceil #1 \big\rceil}
\def\+{\!+\!}
\def\-{\!-\!}
\def\plmi{\!\pm\!}
\def\m{\!-\!}
\def\u#1{\underline{#1}}
\def\o#1{\overline{#1}}

\def\bproc{{\bf procedure\ }}
\def\bfunc{{\bf function\ }}
\def\bvar{{\bf var\ }}
\def\barray{{\bf array\ }}
\def\bof{{\bf of\ }}
\def\bfor{{\bf for\ }}
\def\bto{{\bf to\ }}
\def\bdownto{{\bf downto\ }}
\def\bwhile{{\bf while\ }}
\def\brep{{\bf repeat\ }}
\def\buntil{{\bf until\ }}
\def\band{{\bf and\ }}
\def\bor{{\bf or\ }}
\def\bdo{{\bf do\ }}
\def\bif{{\bf if\ }}
\def\bthen{{\bf then\ }}
\def\belse{{\bf else\ }}
\def\belsif{{\bf elsif\ }}
\def\bnot{{\bf not\ }}
\def\bgoto{{\bf goto\ }}
\def\breturn{{\bf return\ }}
\def\boutput{{\bf output\ }}
\def\la{\leftarrow}
\def\ra{\rightarrow}
\def\llra{\longleftrightarrow}
\def\q{\quad}
\def\qq{\qquad}
\def\com#1{\hspace{6pt}{\bf ---}\hspace{6pt}{\sl #1}}
\def\rem#1{\hspace{24pt}{\sl #1}}
\def\pref(#1,#2){$#1$ is a prefix of $#2$}
\def\suff(#1,#2){$#1$ is a suffix of $#2$}
\def\reg(#1,#2){$#2$ is $#1$-regular}
\def\notreg(#1,#2){$#2$ is not $#1$-regular}

\begin{frontmatter}


\title{Enhanced string factoring from alphabet orderings\tnoteref{mytitlenote,myfootnote}}
\tnotetext[mytitlenote]{The authors were part-funded by the European Regional Development Fund through the Welsh Government}
\tnotetext[myfootnote]{A preliminary version of this paper was accepted as a poster in IWOCA 2018 (International Workshop on Combinatorial Algorithms)}

\author[mymainaddress]{Amanda Clare\corref{mycorrespondingauthor}} 
\author[mymainaddress,mysecondaryaddress,mythirdaddress]{Jacqueline W. Daykin}
\address[mymainaddress]{Department of Computer Science, Aberystwyth University, SY23 3DB, UK\\
\{afc,jwd6\}@aber.ac.uk}
\address[mysecondaryaddress]{Department of Informatics,
King's College London, WC2B 4BG, UK\\
jackie.daykin{@}kcl.ac.uk}
\address[mythirdaddress]{Department of Information Science, Stellenbosch
University, South Africa} 
\cortext[mycorrespondingauthor]{Corresponding author}

\begin{abstract}
In this note we consider the concept of alphabet ordering in the context of string factoring. We propose a greedy-type algorithm which produces Lyndon factorizations with small numbers of factors along with a modification for large numbers of factors. For the technique we introduce the Exponent Parikh vector. Applications and research directions derived from circ-UMFFs are discussed.
\end{abstract}

\begin{keyword}
alphabet order  \sep big data \sep circ-UMFF \sep factor \sep factorization  \sep greedy algorithm \sep lexicographic orderings \sep Lyndon word \sep sequence alignment \sep string


\end{keyword}

\end{frontmatter}


\section{Introduction}

Factoring strings is a powerful form of the divide and conquer problem-solving paradigm for strings or words. Notably the Lyndon factorization is both efficient to compute and useful in practice \cite{CFL-58, Du-83}.
We study the effect of an alphabet's order on the number of factors in a Lyndon factorization and propose a greedy-type algorithm for assigning an order to the alphabet. In addition, we formalize the distinction between the sets of Lyndon and co-Lyndon words as avenues for alternative string factorizations. More generally, circ-UMFFs  provide the opportunity for achieving further diversity with string factors \cite{DD-08, DDS-09}.

\subsection{Notation}
\label{subsec-notation}

Given an integer $n \ge 1$ and a nonempty set of symbols $\Sigma$ (bounded or unbounded),
a \itbf{string of length n}, equivalently \itbf{word}, over $\Sigma$ takes the form $\s{x} = x_1 ... x_n$ with each $x_i \in \Sigma$.
For brevity, we write $\s{x} = \s{x}[1..n]$ and we let $\s{x}[i]$ denote the $i$-th symbol of \s{x}.
The length $n$ of a string \s{x} is denoted by $|\s{x}|$. The set
$\Sigma$ is called an \itbf{alphabet} whose members are \itbf{letters} or \itbf{characters}, and
$\Sigma^+$ denotes the set of all nonempty finite strings over $\Sigma$.
The \itbf{empty string} of length zero is
denoted \s{\varepsilon}; we write $\Sigma^* = \Sigma^+ \cup \{\s{\varepsilon}\}$.
A string \s{w} is called a \itbf{factor} of $\s{x}[1..n]$ if and only if $\s{w} = \s{x}[i..j]$ for $1 \le i \le j \le n$. If $\s{x} = \s{uv}$,
then \s{vu} is said to be a \itbf{rotation} (\itbf{cyclic shift} or \itbf{conjugate})
of \s{x}. 
A string \s{x} is said to be a \itbf{repetition} if and only if it has a factorization
$\s{x} = \s{u}^k$ for some integer $k > 1$;
otherwise, \s{x} is said to be \itbf{primitive}.
For a string \s{x}, the reversed string \s{\overline{x}} is defined as
$\s{\overline{x}} = \s{x}[n]\s{x}[n\- 1]\cdots\s{x}[1]$.
A string which is both a proper prefix and a proper suffix of a nonempty string \s{x} is called a \itbf{border} of \s{x}.

If $\Sigma$ is a totally ordered alphabet then \itbf{lexicographic ordering} (\itbf{lexorder}) $\s{u} < \s{v}$ with $\s{u},\s{v} \in \Sigma^+$
is defined if and only if either \s{u} is a proper prefix of
\s{v}, or $\s{u}=\s{r}a\s{s}$, $\s{v}=\s{r}b\s{t}$ for some $a,b \in \Sigma$ such that $a<b$ and for some $\s{r},\s{s},\s{t} \in \Sigma^*$.
We call the ordering $\prec$ based on lexorder of reversed strings \itbf{co-lexicographic ordering} (\itbf{co-lexorder}).

\section{Unique Maximal Factorization Families (UMFFs)}
\label{sec-UMFFs}

A subset $\mathcal{W} \subseteq \Sigma^+$ is a
\itbf{factorization family} (FF) if and only if for {\rm every} nonempty string
\s{x} on $\Sigma$ there exists a factorization of \s{x} over $\mathcal{W}$, $F_{\mathcal{W}}(\s{x})$.
If every factor of 
$F_{\mathcal{W}}(\s{x})$ is maximal (\itbf{max}) with respect to $\mathcal{W}$ then the factorization is said to be max, and hence must be unique. So if $\mathcal{W}$ is an FF on an alphabet $\Sigma$ then $\mathcal{W}$ is a \itbf{unique maximal factorization family} (UMFF)
if 
there exists a max factorization $F_{\mathcal{W}}(\s{x})$
for {\rm every} string $\s{x} \in \Sigma^+$ -- for this theory see \cite{DD-08,DDS-09}.

An UMFF $\mathcal{W}$ 
is a \itbf{circ-UMFF} if 
it contains exactly one rotation of every primitive string $\s{x} \in \Sigma^+$.
The classic and foundational 
circ-UMFF is the set of Lyndon words, which we denote $\mathcal{L}$, where the rotation chosen 
is the one that is least in the lexorder derived from an ordering of the letters of the alphabet $\Sigma$
(\cite{CFL-58,Du-83,
DD-08}).
Subsequently, the co-Lyndon circ-UMFF was formed in \cite{DDS-09} consisting of those words which are least amongst their rotations in co-lexorder.

\ifj
For example, over the ordered Roman alphabet, the flower names {\it diascia} and {\it foxglove}, although not 
Lyndon words are both co-Lyndon words. Clearly Lyndon and co-Lyndon words are reversals of each other, hence {\it aicsaid} and {\it evolgxof} are Lyndon words.
\fi
Every circ-UMFF $\mathcal{W}$ yields a strict order relation, the \itbf{$\mathcal{W}$-order}: if $\mathcal{W}$ contains strings \s{u}, \s{v} and \s{uv} then $\s{u} <_{\mathcal{W}} \s{v}$. 
For the Lyndon circ-UMFF, its specific $\mathcal{W}$-order is lexorder:

\begin{theorem}
\label{thm-Lynorder}
(Duval \cite{Du-83}) Let $\mathcal{L}$ be the set of Lyndon words, and suppose $\s{u}, \s{v} \in \mathcal{L}$. Then $\s{uv} \in \mathcal{L}$
if and only if $\s{u}$ comes before $\s{v}$ in lexorder.
\end{theorem}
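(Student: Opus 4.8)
The plan is to work throughout with the standard suffix characterization of Lyndon words: a nonempty word $\s{w}$ lies in $\mathcal{L}$ if and only if $\s{w} < \s{z}$ in lexorder for every proper nonempty suffix $\s{z}$ of $\s{w}$ (this is equivalent to $\s{w}$ being the unique least rotation of a primitive word; see \cite{CFL-58,Du-83}). I would first record this equivalence, or cite it, and then establish the two implications separately. A standing remark used several times is that lexorder is a strict total order, so it is transitive and is preserved under prepending a common prefix.

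For the forward direction, assume $\s{uv} \in \mathcal{L}$. Since $\s{u}$ is nonempty, $\s{v}$ is a proper suffix of $\s{uv}$, so the characterization gives $\s{uv} < \s{v}$. I then compare $\s{u}$ with $\s{v}$ by cases. If $\s{v}$ were a prefix of $\s{u}$, then $\s{v}$ would be a proper prefix of $\s{uv}$ and hence $\s{v} < \s{uv}$, contradicting $\s{uv} < \s{v}$; so this case is impossible. If $\s{u}$ is a proper prefix of $\s{v}$, then $\s{u} < \s{v}$ holds directly. Otherwise $\s{u}$ and $\s{v}$ first disagree at some position $i \le \min(|\s{u}|,|\s{v}|)$; because $\s{uv}$ begins with $\s{u}$, the words $\s{uv}$ and $\s{v}$ also first disagree at this same $i$, so $\s{uv} < \s{v}$ forces $\s{u}[i] < \s{v}[i]$, that is $\s{u} < \s{v}$.

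For the converse, assume $\s{u},\s{v} \in \mathcal{L}$ with $\s{u} < \s{v}$, and show that $\s{uv}$ is smaller than each of its proper nonempty suffixes. These split into three families: (A) suffixes $\s{s}\s{v}$ with $\s{s}$ a proper nonempty suffix of $\s{u}$; (B) the suffix $\s{v}$; and (C) proper nonempty suffixes $\s{t}$ of $\s{v}$. For (B) I split on whether $\s{u}$ is a proper prefix of $\s{v}$: if not, then (since $\s{v}$ cannot be a prefix of $\s{u}$, else $\s{v} < \s{u}$) the first mismatch of $\s{u}$ and $\s{v}$ already witnesses $\s{uv} < \s{v}$; if $\s{u}$ is a proper prefix, write $\s{v} = \s{u}\s{w}$ with $\s{w}$ nonempty, so after the common prefix $\s{u}$ the comparison of $\s{uv}$ and $\s{v}$ reduces to $\s{v}$ versus $\s{w}$, and since $\s{w}$ is a proper suffix of the Lyndon word $\s{v}$ we have $\s{v} < \s{w}$, giving $\s{uv} < \s{v}$. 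Family (C) then follows by transitivity, using $\s{uv} < \s{v} < \s{t}$ where $\s{v} < \s{t}$ because $\s{v} \in \mathcal{L}$. For (A), since $\s{u} \in \mathcal{L}$ we have $\s{u} < \s{s}$; as $|\s{s}| < |\s{u}|$, $\s{u}$ is not a prefix of $\s{s}$, and $\s{s}$ is not a prefix of $\s{u}$ either (else $\s{s} < \s{u}$), so $\s{u}$ and $\s{s}$ disagree at some $i \le |\s{s}|$ with $\s{u}[i] < \s{s}[i]$; because $\s{uv}$ starts with $\s{u}$ and $\s{s}\s{v}$ starts with $\s{s}$, this same position yields $\s{uv} < \s{s}\s{v}$.

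I expect the main obstacle to be the boundary-crossing suffixes of family (A) together with the prefix subcase of (B): these are precisely the places where the Lyndon hypotheses on $\s{u}$ and on $\s{v}$ must genuinely be invoked, rather than merely the assumption $\s{u} < \s{v}$. The delicate point is to argue that a mismatch occurring inside $\s{u}$ (or inside $\s{s}$) is decisive and cannot be overturned by the trailing copy of $\s{v}$, which requires tracking exactly which prefix relationships among $\s{u}$, $\s{v}$ and $\s{s}$ are possible and ruling out the degenerate ones using the suffix characterization.
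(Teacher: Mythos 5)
The first thing to note is that the paper contains no proof of Theorem~\ref{thm-Lynorder} at all: it is stated as a known result and attributed to Duval \cite{Du-83}, with the citation serving as the only justification. Your proof therefore cannot match or diverge from ``the paper's proof''; judged on its own merits, it is correct and complete, and it is essentially the classical argument. Your reliance on the suffix characterization ($\s{w} \in \mathcal{L}$ if and only if $\s{w}$ is lexicographically smaller than every proper nonempty suffix of $\s{w}$) is legitimate provided you prove or cite its equivalence with the paper's definition of Lyndon words as least rotations of primitive words, as you indicate you would. The forward direction is sound: $\s{uv} < \s{v}$ rules out $\s{v}$ being a prefix of $\s{u}$ (which also disposes of the degenerate case $\s{u} = \s{v}$), and in the remaining cases the first mismatch between $\s{u}$ and $\s{v}$ is also the first mismatch between $\s{uv}$ and $\s{v}$, forcing $\s{u} < \s{v}$. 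The converse correctly enumerates the proper nonempty suffixes of $\s{uv}$ into your three families, and each is handled soundly: family (A) uses $\s{u} < \s{s}$ (the Lyndon property of $\s{u}$) together with the observation that the mismatch must occur within $\s{s}$, so the trailing copy of $\s{v}$ cannot overturn it; family (B) invokes the Lyndon property of $\s{v}$ exactly where it is needed, in the subcase $\s{v} = \s{uw}$; and family (C) follows by transitivity from (B). This is the standard proof --- essentially the one in Duval's paper and in Lothaire's treatment --- so relative to the paper you gain self-containedness at no cost in correctness.
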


It was observed in \cite{DDS-09} that the analogue of Theorem~\ref{thm-Lynorder} does not hold for every circ-UMFF. 
\ifj

We illustrate this phenomenon for the co-Lyndon circ-UMFF. The primitive words $\s{u} = cba$ and $\s{v} = cbba$ are clearly co-Lyndon words over the Roman alphabet.
Analysis of all of the rotations of $\s{uv}$ shows that it is co-Lyndon, and by Definition~\ref{def-order}
we have $\s{u} <_{\mbox{co-}\mathcal{L}} \s{v}$. However, $\s{v}$ comes before $\s{u}$ in co-lexorder, that is $\s{v} <_{\mbox{co-lex}} \s{u}$! In other words,
$\mathcal{W}$-order can be defined quite independently
of the ordering of the elements of $\Sigma^*$.\\

\fi
We show here that 
the respective orders for the sets of co-Lyndon words and words in co-lexorder are always distinct.

\begin{lemma}
\label{lem-coLynorder}
Let $\mbox{co-}\mathcal{L}$ be the set of co-Lyndon words, and suppose $\s{u}, \s{v} \in \mbox{co-}\mathcal{L}$. Then $\s{uv} \in \mbox{co-}\mathcal{L}$ if and only if $\s{v}$ comes before $\s{u}$ in co-lexorder. 
\end{lemma}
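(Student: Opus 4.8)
The plan is to reduce the statement to Duval's theorem (Theorem~\ref{thm-Lynorder}) by exploiting the reversal duality between the two circ-UMFFs. The linchpin is the observation that a primitive string $\s{w}$ is co-Lyndon if and only if its reversal $\s{\overline{w}}$ is Lyndon. I would state and prove this as a preliminary claim, since the lemma then follows formally from it together with Theorem~\ref{thm-Lynorder}.

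To prove the duality, I would first note that reversal is a length-preserving bijection from the rotations of $\s{w}$ onto the rotations of $\s{\overline{w}}$: if $\s{w} = \s{pq}$ then the rotation $\s{qp}$ reverses to $\s{\overline{p}}\,\s{\overline{q}}$, which is precisely the rotation of $\s{\overline{w}} = \s{\overline{q}}\,\s{\overline{p}}$ obtained by cutting at the complementary position. Since co-lexorder is by definition lexorder applied to reversed strings, we have $\s{w} \preceq \s{r}$ for every rotation $\s{r}$ of $\s{w}$ exactly when $\s{\overline{w}} \le \s{\overline{r}}$ in lexorder for every such $\s{r}$; as $\s{\overline{r}}$ ranges over all rotations of $\s{\overline{w}}$, this says precisely that $\s{\overline{w}}$ is least among its rotations in lexorder, i.e.\ that $\s{\overline{w}}$ is Lyndon. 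This establishes the duality in both directions.

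With the duality in hand the lemma is immediate. Given co-Lyndon words $\s{u}, \s{v}$, the words $\s{\overline{u}}$ and $\s{\overline{v}}$ are Lyndon, and because reversal inverts concatenation we have $\s{\overline{uv}} = \s{\overline{v}}\,\s{\overline{u}}$. Hence $\s{uv}$ is co-Lyndon if and only if $\s{\overline{v}}\,\s{\overline{u}}$ is Lyndon, which by Theorem~\ref{thm-Lynorder} (applied to the Lyndon words $\s{\overline{v}}$ and $\s{\overline{u}}$, in that order) holds if and only if $\s{\overline{v}} < \s{\overline{u}}$ in lexorder. By the definition of co-lexorder the latter is exactly $\s{v} \prec \s{u}$, that is, $\s{v}$ comes before $\s{u}$ in co-lexorder, as required.

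The only genuinely delicate point is the duality claim, and within it the bookkeeping that reversal sends rotations of $\s{w}$ bijectively onto rotations of $\s{\overline{w}}$ while turning the co-lexorder comparison into an ordinary lexorder comparison; once that correspondence is pinned down, the reversal-of-a-product identity $\s{\overline{uv}} = \s{\overline{v}}\,\s{\overline{u}}$ and a direct appeal to Duval's theorem finish the argument with no further computation. Note in particular that the order reversal in the conclusion (\s{v} before \s{u}, rather than \s{u} before \s{v}) is forced by this very identity, which neatly explains why the co-Lyndon analogue differs from Theorem~\ref{thm-Lynorder}.
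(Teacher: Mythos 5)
Your proof is correct, and it differs from the paper's in a meaningful way. The paper handles the two directions asymmetrically: for the ``if'' direction ($\s{v} \prec \s{u}$ implies $\s{uv} \in \mbox{co-}\mathcal{L}$) it argues exactly as you do, passing to reversals and invoking Theorem~\ref{thm-Lynorder}; but for the ``only if'' direction it gives a direct combinatorial argument instead of reusing the duality. There, from $\s{uv} \in \mbox{co-}\mathcal{L}$ the paper deduces (citing the circ-UMFF theory) that $\s{uv}$ is primitive and border-free, hence $\s{u} \neq \s{v}$, and then rules out $\s{u} \prec \s{v}$ by two cases: if $\s{u}$ is a proper suffix of $\s{v}$ then $\s{uv} = \s{uwu}$ is bordered, a contradiction; otherwise the rightmost mismatch between $\s{u}$ and $\s{v}$ would force the rotation $\s{vu}$ to precede $\s{uv}$ in co-lexorder, contradicting that $\s{uv}$ is co-Lyndon. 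Your route is more uniform and economical: by isolating the duality claim (primitive $\s{w}$ is co-Lyndon iff $\s{\overline{w}}$ is Lyndon, proved via the rotation bijection) and combining it with the identity $\s{\overline{uv}} = \s{\overline{v}}\,\s{\overline{u}}$, you get both directions from a single equivalence chain through Duval's theorem. Notably, the paper itself uses this duality implicitly and without proof (both in asserting $\s{\overline{u}}, \s{\overline{v}} \in \mathcal{L}$ and in the step ``$\s{\overline{v}}\,\s{\overline{u}} \in \mathcal{L}$ hence $\s{uv} \in \mbox{co-}\mathcal{L}$''), so your explicit proof of it actually fills in something the paper glosses over. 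What the paper's hands-on converse buys is independence from the ``only if'' half of Theorem~\ref{thm-Lynorder} and a direct illustration of the border-free and rotation properties of circ-UMFF elements; what yours buys is symmetry, brevity, and avoidance of the paper's somewhat delicate rightmost-mismatch indexing between words of possibly different lengths.
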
 

\begin{proof}
Since $\s{u}, \s{v} \in \mbox{co-}\mathcal{L}$ then $\s{\overline{u}}, \s{\overline{v}} \in \mathcal{L}$. If $\s{v} \prec \s{u}$ in co-lexorder then $\s{\overline{v}} < \s{\overline{u}}$ in lexorder. Applying Theorem \ref{thm-Lynorder} we have $\s{\overline{v}} \s{\overline{u}} \in \mathcal{L}$ and hence $\s{uv} \in \mbox{co-}\mathcal{L}$. 
Next if $\s{uv} \in \mbox{co-}\mathcal{L}$ then it must be primitive and border-free \cite{DD-08}. Thus $\s{u} \neq \s{v}$ which gives rise to two cases. Suppose first that $\s{u} \prec \s{v}$. If \s{u} is a proper suffix of \s{v} then $\s{uv} =  \s{uwu}$ for some $\s{w} \neq \varepsilon$ contradicting the border-free property. Otherwise, with $|\s{u}| = n$ there is some largest $j$, $1 \le j \le n$, such that $\s{u}[j] \neq \s{v}[j]$. If $\s{u}[j] < \s{v}[j]$ then $\s{vu} \prec \s{uv}$ contradicting $\s{uv} \in \mbox{co-}\mathcal{L}$. We conclude that $\s{u}[j] > \s{v}[j]$, and so $\s{v} \prec \s{u}$ as required. 
\end{proof}

The sets of Lyndon and co-Lyndon words are distinct and almost disjoint.

\begin{lemma}
\label{lem-noteq}
$\mathcal{L} \neq \mbox{co-}\mathcal{L}$ and $\mathcal{L} \cap \mbox{co-}\mathcal{L} = \Sigma$.
\end{lemma}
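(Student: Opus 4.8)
The plan is to prove the two assertions separately, reducing both to a single structural observation about the first and last letters of multi-letter Lyndon words. First I would dispose of the easy containment for the intersection: any single letter $a \in \Sigma$, viewed as a string of length one, has exactly one rotation (itself), so it is trivially least among its rotations in both lexorder and co-lexorder. Hence $\Sigma \subseteq \mathcal{L}$ and $\Sigma \subseteq \mbox{co-}\mathcal{L}$, giving $\Sigma \subseteq \mathcal{L} \cap \mbox{co-}\mathcal{L}$. All the work therefore lies in the reverse containment $\mathcal{L} \cap \mbox{co-}\mathcal{L} \subseteq \Sigma$, i.e. in showing that no string of length at least two can be simultaneously Lyndon and co-Lyndon.

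The key step is a sign obstruction coming from comparing the extreme letters. I would first argue that a Lyndon word $\s{w} = w_1 \cdots w_n$ with $n \ge 2$ satisfies $w_1 < w_n$. This follows from the standard equivalent characterization of Lyndon words as those primitive strings that are strictly smaller in lexorder than each of their proper suffixes \cite{CFL-58, Du-83}: applying this to the length-one suffix $w_n$ gives $\s{w} < w_n$, and since $|\s{w}| \ge 2$ this forces $w_1 < w_n$ (equality $w_1 = w_n$ would make $w_n$ a proper prefix of $\s{w}$ and hence smaller, a contradiction). Now I would invoke the reversal correspondence already used in Lemma~\ref{lem-coLynorder}, namely $\s{w} \in \mbox{co-}\mathcal{L}$ if and only if $\s{\overline{w}} \in \mathcal{L}$. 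Since $\s{\overline{w}} = w_n w_{n-1} \cdots w_1$, applying the Lyndon fact to $\s{\overline{w}}$ shows that a co-Lyndon word of length $n \ge 2$ satisfies $w_n < w_1$. A word of length at least two lying in both families would then satisfy $w_1 < w_n$ and $w_n < w_1$ simultaneously, which is impossible. Hence $\mathcal{L} \cap \mbox{co-}\mathcal{L} \subseteq \Sigma$, and combined with the easy containment we obtain $\mathcal{L} \cap \mbox{co-}\mathcal{L} = \Sigma$.

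Finally, for $\mathcal{L} \neq \mbox{co-}\mathcal{L}$ (which presumes $|\Sigma| \ge 2$, the only interesting case), I would exhibit a witness: pick letters $a < b$ in $\Sigma$. The string $ab$ is a Lyndon word, yet its reversal $ba$ is not Lyndon, since its rotation $ab$ is strictly smaller; so by the reversal correspondence $ab \notin \mbox{co-}\mathcal{L}$, whence $ab \in \mathcal{L} \setminus \mbox{co-}\mathcal{L}$ and the two sets differ. The main obstacle is purely the sign obstruction of the middle paragraph, namely establishing $w_1 < w_n$ for multi-letter Lyndon words; once that is in hand, the reversal correspondence reduces the co-Lyndon case to the Lyndon case and both claims follow immediately.
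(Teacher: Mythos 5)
Your proof is correct and takes essentially the same approach as the paper: both rest on the observation that a Lyndon word of length at least two has first letter strictly smaller than its last letter, while a co-Lyndon word satisfies the opposite inequality, so the two families can share only the single letters of $\Sigma$. The only cosmetic differences are that you justify the first/last-letter fact via the proper-suffix characterization of Lyndon words where the paper uses border-freeness plus minimality of the first letter, and you exhibit the explicit witness $ab$ and argue the inclusion $\Sigma \subseteq \mathcal{L} \cap \mbox{co-}\mathcal{L}$ directly where the paper cites the general fact that every circ-UMFF contains $\Sigma$.
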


\begin{proof}
Let $\s{v} \in \mathcal{L}$ and $\s{w} \in \mbox{co-}\mathcal{L}$ with $|\s{v}|, |\s{w}| > 2$. Then $\s{v}$ starts with some letter $\alpha$ which is minimal in $\s{v}$. Since $\s{v}$ is border-free then it ends with some $\beta$ where $\alpha < \beta$. Similary, $\s{w}$ starts $\gamma$ and ends $\delta$, where $\gamma > \delta$. Therefore $\s{v} \neq \s{w}$. Finally, every circ-UMFF contains the alphabet $\Sigma$ as expressed in \cite{DD-08,DDS-09}.
\end{proof}
\ifj
\begin{lemma}
\label{cor-noteq}
Let $\s{x}$ be a finite string where $\s{x} \neq \alpha^t$ for some $\alpha \in \Sigma$ and $t>1$. Then for a Lyndon factor $\s{y}$ with $|\s{y}|>2$ and a co-Lyndon factor $\s{z}$  with $|\s{z}|>2$: (i)  if $\s{y} \in  \mathcal{L}(\s{x})$ then $\s{y} \notin \mbox{co-}\mathcal{L}(\s{x})$; (ii)  if $\s{z} \in \mbox{co-}\mathcal{L}(\s{x})$ then $\s{z} \notin \mathcal{L}(\s{x})$; (iii)  $\mathcal{L}(\s{x}) \neq \mbox{co-}\mathcal{L}(\s{x})$;  $\mathcal{L}(\s{x}) \cap \mbox{co-}\mathcal{L}(\s{x}) = \Sigma$.

\end{lemma}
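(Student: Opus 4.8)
The plan is to obtain the whole statement as a corollary of Lemma~\ref{lem-noteq}, reinforced by the standard fact that the factors of a Lyndon factorization are non-increasing in lexorder \cite{CFL-58,Du-83}. The workhorse is the intersection identity $\mathcal{L} \cap \mbox{co-}\mathcal{L} = \Sigma$ from Lemma~\ref{lem-noteq}, which says that no string of length at least $2$ is simultaneously Lyndon and co-Lyndon. Parts (i) and (ii) are then immediate: a factor $\s{y} \in \mathcal{L}(\s{x})$ is a Lyndon word, and since $|\s{y}| > 2$ it is not a single letter, so $\s{y} \notin \mbox{co-}\mathcal{L}$ and hence $\s{y}$ is not a co-Lyndon factor, giving $\s{y} \notin \mbox{co-}\mathcal{L}(\s{x})$; part (ii) is the mirror statement with the roles of $\mathcal{L}$ and $\mbox{co-}\mathcal{L}$ exchanged. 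The intersection claim in (iii) follows the same way: a factor lying in both factorizations is at once Lyndon and co-Lyndon, hence a member of $\mathcal{L} \cap \mbox{co-}\mathcal{L} = \Sigma$, i.e.\ a single letter.

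The substantive assertion is the inequality $\mathcal{L}(\s{x}) \neq \mbox{co-}\mathcal{L}(\s{x})$, and here I would exploit two characterizations. First, because the factors of a Lyndon factorization are non-increasing and every single letter is Lyndon, the Lyndon factorization of a string consists exclusively of single letters precisely when that string is non-increasing in lexorder. Second, since a word is co-Lyndon exactly when its reversal is Lyndon (as used already in the proof of Lemma~\ref{lem-coLynorder}), the co-Lyndon factorization of $\s{x}$ is the reversal of the Lyndon factorization of $\s{\overline{x}}$; therefore it consists exclusively of single letters precisely when $\s{\overline{x}}$ is non-increasing, that is, when $\s{x}$ is non-decreasing. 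Because $\s{x}$ is not a power of a single symbol it is non-constant, so it cannot be simultaneously non-increasing and non-decreasing. Consequently at least one of the two factorizations contains a factor $\s{w}$ with $|\s{w}| \ge 2$; such a $\s{w}$ is Lyndon or co-Lyndon but, by Lemma~\ref{lem-noteq}, not both, so it lies in exactly one of the factorizations and $\mathcal{L}(\s{x}) \neq \mbox{co-}\mathcal{L}(\s{x})$.

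The step I expect to be the main obstacle is pinning down the reversal correspondence between the two factorizations together with the non-increasing characterization in both the Lyndon and co-Lyndon directions; the uniqueness supplied by the UMFF property is what lets me transport the Chen--Fox--Lyndon description of $\mathcal{L}(\s{\overline{x}})$ to $\mbox{co-}\mathcal{L}(\s{x})$. Once this bookkeeping is in place, the non-constancy of $\s{x}$ closes the inequality, and parts (i), (ii) and the intersection claim drop out directly from Lemma~\ref{lem-noteq}.
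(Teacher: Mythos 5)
Your proof is correct, but it takes a genuinely different route from the paper's. For parts (i) and (ii) the paper does not invoke Lemma~\ref{lem-noteq} at all: it uses the circ-UMFF splitting property from \cite{DD-08} (any word of length $>1$ in a circ-UMFF splits as $\s{v}\s{w}$ with both pieces in the family), then Theorem~\ref{thm-Lynorder} to get $\s{v}<\s{w}$ in lexorder, and then Lemma~\ref{lem-coLynorder} to conclude $\s{v}\s{w}\notin\mbox{co-}\mathcal{L}$; part (iii) is then dismissed as ``immediate''. You instead pull (i), (ii) and the intersection claim directly out of the disjointness statement $\mathcal{L}\cap\mbox{co-}\mathcal{L}=\Sigma$ of Lemma~\ref{lem-noteq}, which is cleaner and, if anything, tighter: the paper's appeal to Lemma~\ref{lem-coLynorder} is loose, since that lemma presupposes both pieces are co-Lyndon words, a hypothesis not established for the split of a Lyndon word. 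Where you genuinely add content is the inequality in (iii): your argument --- the Lyndon factorization is all singletons iff the string is non-increasing, the co-Lyndon factorization is all singletons iff it is non-decreasing (via the reversal correspondence and the uniqueness in Theorem~\ref{thm-faxmono}), so a non-constant string has a factor of length $\ge 2$ in one factorization which, by disjointness, cannot appear in the other --- is exactly what is hiding behind the paper's ``immediate'', and it is really needed in the regime where every factor has length $\le 2$ (e.g.\ $\s{x}=ab$), a case that (i) and (ii) cannot reach since they assume length $>2$. What the paper's route buys in exchange is that the splitting argument lives entirely inside circ-UMFF machinery, suggesting how the result transfers to other circ-UMFF pairs with opposed $\mathcal{W}$-orders, whereas your route is more elementary and reuses lemmas already proved. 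One small patch: your length-$2$ step cites Lemma~\ref{lem-noteq}, whose proof as written only treats words of length $>2$; for length $2$ disjointness is trivial (a Lyndon word of length $2$ has ascending letters, a co-Lyndon one descending), so add that one line.
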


\begin{proof}
From \cite{DD-08}, if $\s{u}$ belongs to a circ-UMFF with $|\s{u}|>1$, then $\s{u}$ can be split into two distinct factors $\s{v}\s{w}$; furthermore, either $\s{v}\s{w}$ or $\s{w}\s{v}$ will also belong to the circ-UMFF.
For part (i), suppose $\s{y} \in  \mathcal{L}(\s{x})$ with $\s{y} = \s{v}\s{w}$, then from Theorem \ref{thm-Lynorder}
we have $\s{v} < \s{w}$. From Lemma \ref{lem-coLynorder} we then have $\s{v}\s{w} \notin \mbox{co-}\mathcal{L}(\s{x})$. Part (ii) follows similarly. Part (iii) is immediate and the circ-UMFFs $\mathcal{L}$ and co-$\mathcal{L}$ are distinct. In particular, their intersection is the singleton alphabet letters $\Sigma$.
\end{proof}

\fi
The following result generalizes the Lyndon factorization theorem \cite{CFL-58} and is a key to further applications of string decomposition.

\begin{theorem}
\label{thm-faxmono}
\cite{DD-08}
Let ${\mathcal{W}}$ be a circ-UMFF and suppose $\s{x} = \s{u_1u_2\cdots u_m}$,
with each $\s{u_j} \in {\mathcal{W}}$.
Then $F_{\mathcal{W}}(\s{x}) = \s{u_1u_2}\cdots\s{u_m}$ if and only if
$\s{u_1} \ge_{\mathcal{W}} \s{u_2} \ge_{\mathcal{W}}  ... \ge_{\mathcal{W}} \s{u_m}$.
\end{theorem}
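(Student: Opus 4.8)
The plan is to mimic the classical Chen--Fox--Lyndon proof, substituting the abstract $\mathcal{W}$-order for lexorder and letting the defining property of $<_{\mathcal{W}}$ play the role that Theorem~\ref{thm-Lynorder} plays for $\mathcal{L}$. The first step is to record the reformulation that drives everything: on a circ-UMFF, $<_{\mathcal{W}}$ is a \emph{total} strict order, and for distinct $\s{u},\s{v}\in\mathcal{W}$ exactly one of $\s{uv},\s{vu}$ lies in $\mathcal{W}$, so that $\s{u}<_{\mathcal{W}}\s{v}$ if and only if $\s{uv}\in\mathcal{W}$. The forward implication is just the definition of the $\mathcal{W}$-order; the totality (and hence the converse) is the structural fact about circ-UMFFs established in \cite{DD-08}, and is the exact generalization of Theorem~\ref{thm-Lynorder}.

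For the ``only if'' direction I would take $F_{\mathcal{W}}(\s{x})=\s{u_1 u_2}\cdots\s{u_m}$ to be the maximal factorization and argue by contradiction. If $\s{u_j}<_{\mathcal{W}}\s{u_{j+1}}$ for some $j$, then the reformulation gives $\s{u_j u_{j+1}}\in\mathcal{W}$; but then at position $j$ there is a strictly longer $\mathcal{W}$-factor, namely $\s{u_j u_{j+1}}$, whose removal still leaves the valid tail $\s{u_{j+2}}\cdots\s{u_m}$, so $\s{u_j}$ was not maximal --- a contradiction. Hence $\s{u_j}\ge_{\mathcal{W}}\s{u_{j+1}}$ for every $j$ and the chain is non-increasing.

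For the ``if'' direction I would assume $\s{u_1}\ge_{\mathcal{W}}\cdots\ge_{\mathcal{W}}\s{u_m}$. Because $F_{\mathcal{W}}(\s{x})$ exists and is unique, and (by the direction just proved) its own factors are in non-increasing $\mathcal{W}$-order, it suffices to show that a string has \emph{at most one} non-increasing $\mathcal{W}$-factorization; the given factorization must then be $F_{\mathcal{W}}(\s{x})$. I would prove this uniqueness by induction on $|\s{x}|$: given two non-increasing factorizations $\s{u_1}\cdots\s{u_m}$ and $\s{v_1}\cdots\s{v_k}$ of $\s{x}$, one leading factor is a prefix of the other, and combining the totality of $<_{\mathcal{W}}$ with the border-freeness and primitivity of circ-UMFF words \cite{DD-08} forces $\s{u_1}=\s{v_1}$; stripping it off and applying the induction hypothesis finishes the argument.

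The main obstacle is the leading-factor comparison in this last step. For $\mathcal{L}$ one resolves it with lexicographic extremality (the smallest suffix, or the longest Lyndon prefix), but for a general circ-UMFF no such concrete handle is available and everything must be run through the abstract order. Making the prefix comparison rigorous --- ruling out $\s{u_1}$ being a proper prefix of $\s{v_1}$ by playing the border-free property against $\s{u_1}\ge_{\mathcal{W}}\s{u_2}\ge_{\mathcal{W}}\cdots$ --- is where the real work lies, and it is exactly the point at which the UMFF structure, rather than any ordering of $\Sigma$, must be invoked.
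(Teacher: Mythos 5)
First, a point of order: the paper itself gives no proof of Theorem~\ref{thm-faxmono} --- it is quoted from \cite{DD-08} as an imported result --- so there is no in-paper argument to compare yours against, and I am judging your proposal on its own merits. Your ``only if'' half is sound, granted the structural facts you import from \cite{DD-08}: that $<_{\mathcal{W}}$ is a strict \emph{total} order on $\mathcal{W}$, equivalently that for distinct $\s{u},\s{v}\in\mathcal{W}$ exactly one of $\s{uv}$, $\s{vu}$ lies in $\mathcal{W}$. With that in hand, $\s{u_j}<_{\mathcal{W}}\s{u_{j+1}}$ would force $\s{u_ju_{j+1}}\in\mathcal{W}$, contradicting maximality of the factor $\s{u_j}$. (These imported facts are themselves nontrivial theorems of circ-UMFF theory, not definitional consequences, but leaning on them is legitimate since the paper leans on \cite{DD-08} in the same way.)

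The genuine gap is in the ``if'' half. Your reduction is logically correct: $F_{\mathcal{W}}(\s{x})$ exists, is unique, and by the first half is non-increasing, so it suffices to show a string admits \emph{at most one} non-increasing factorization over $\mathcal{W}$. But that uniqueness is where the entire content of the theorem sits, and you do not prove it: you reduce it to showing that neither leading factor can be a proper prefix of the other, and then state that this ``is where the real work lies.'' That is an admission of a missing step, not a proof of it. The difficulty is real: the classical Lyndon argument closes this case with the chain $\s{v_1} < \s{p} \le \s{u_j} \le \s{u_1} < \s{v_1}$, where $\s{p}$ is the overhanging prefix of $\s{u_j}$, and this chain uses lexorder comparisons between strings that need not lie in $\mathcal{L}$ (prefix order, and the fact that a Lyndon word is smaller than all its proper suffixes). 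None of these moves has an automatic analogue for an abstract circ-UMFF, where $\s{u}<_{\mathcal{W}}\s{v}$ is defined only for $\s{u},\s{v}\in\mathcal{W}$ and only through membership of the concatenation $\s{uv}$ in $\mathcal{W}$. To close the gap you would need further structural machinery from \cite{DD-08} --- for instance the nesting property (any proper prefix of $\s{w}\in\mathcal{W}$ that lies in $\mathcal{W}$ is $<_{\mathcal{W}}\s{w}$, while any proper suffix in $\mathcal{W}$ is $>_{\mathcal{W}}\s{w}$), the right-extension property (if $\s{u}<_{\mathcal{W}}\s{v}$ and $\s{vw}\in\mathcal{W}$ then $\s{u}<_{\mathcal{W}}\s{vw}$), and behind both the xyz-lemma characterizing UMFFs --- none of which you state, prove, or invoke. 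As it stands, your text is a correct plan for a proof whose hardest step is flagged rather than executed.
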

\ifj

In \cite{DDS-09} the sets of Lyndon words, $\mathcal{L}$, and co-Lyndon words, $\mbox{co-}\mathcal{L}$, were further contrasted:
If $\s{xy},\s{yz} \in \mathcal{L}$ for nonempty $\s{x},\s{y},\s{z}$ then $\s{xy} <_{\mathcal{L}} \s{xyyz} <_{\mathcal{L}} \s{xyz} <_{\mathcal{L}} \s{yz}$. 
Whereas in 
${\mbox{co-}\mathcal{L}}$, we have
$\s{xy} <_{\mbox{co-}\mathcal{L}} \s{xyz} <_{\mbox{co-}\mathcal{L}} \s{xyyz} <_{{\mbox{co-}\mathcal{L}}} \s{yz}$.


\fi
\ifj
Certain prefxes and suffixes can be concatenated in circ-UMFFs:

\begin{lemma}
\label{lem-nesting}
([DD-08])
Suppose that \s{w} is an element of a circ-UMFF $\mathcal{W}$. If
$\s{u_1},\s{u_2},\ldots,\s{u_{k_1}}$ are all the proper prefixes of \s{w}
in increasing order of length that belong to $\mathcal{W}$, and if
$\s{v_1},\s{v_2},\ldots,\s{v_{k_2}}$ are all the proper suffixes of \s{w}
in decreasing order of length that belong to $\mathcal{W}$, then
$$\s{u_1} <_{\mathcal{W}} \s{u_2} <_{\mathcal{W}} \cdots <_{\mathcal{W}} \s{u_{k_1}} <_{\mathcal{W}}\s{w} <_{\mathcal{W}} \s{v_1} <_{\mathcal{W}} \s{v_2} <_{\mathcal{W}} \cdots <_{\mathcal{W}} \s{v_{k_2}}.$$
\end{lemma}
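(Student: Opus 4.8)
The plan is to reduce the whole displayed chain to two atomic facts about the $\mathcal{W}$-order, and then to prove those by combining the factorization theorem with the circ-UMFF conjugacy structure. Since the $\mathcal{W}$-order is a strict total order on $\mathcal{W}$ \cite{DD-08}, and since consecutive prefixes (resp.\ suffixes) of $\s{w}$ are themselves nested by length (each shorter one being a proper prefix, resp.\ suffix, of the next, and of $\s{w}$), transitivity means the entire chain follows once I establish: (A) if $\s{p} \in \mathcal{W}$ is a proper prefix of $\s{q} \in \mathcal{W}$ then $\s{p} <_{\mathcal{W}} \s{q}$; and (B) if $\s{p} \in \mathcal{W}$ is a proper suffix of $\s{q} \in \mathcal{W}$ then $\s{q} <_{\mathcal{W}} \s{p}$. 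Applying (A) to the pairs $\s{u_i},\s{u_{i+1}}$ and to $\s{u_{k_1}},\s{w}$ gives the left half; applying (B) to $\s{w},\s{v_1}$ and to $\s{v_j},\s{v_{j+1}}$ gives the right half. Moreover, because reversal sends $\mathcal{W}$ to a circ-UMFF, exchanges prefixes with suffixes, and reverses $\mathcal{W}$-order (exactly as in the passage from Theorem~\ref{thm-Lynorder} to Lemma~\ref{lem-coLynorder}), fact (B) is the reversal-dual of (A). So it suffices to prove (A).

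First I would extract the clean half of (A) from Theorem~\ref{thm-faxmono}. Write $\s{q} = \s{p}\s{s}$ with $\s{s} \neq \s{\varepsilon}$, and let $F_{\mathcal{W}}(\s{s}) = \s{s_1}\cdots\s{s_r}$ so that $\s{s_1} \ge_{\mathcal{W}} \cdots \ge_{\mathcal{W}} \s{s_r}$ with $r \ge 1$. Then $\s{p}\,\s{s_1}\cdots\s{s_r}$ is a factorization of $\s{q}$ over $\mathcal{W}$ with at least two factors, whereas $F_{\mathcal{W}}(\s{q}) = \s{q}$ is a single factor; as these differ and the single factor is trivially the unique non-increasing (hence max) factorization, the multi-factor one is \emph{not} non-increasing. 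Its tail $\s{s_1} \ge_{\mathcal{W}} \cdots \ge_{\mathcal{W}} \s{s_r}$ already is, so the sole possible ascent forces $\s{p} <_{\mathcal{W}} \s{s_1}$.

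The hard part will be upgrading $\s{p} <_{\mathcal{W}} \s{s_1}$ to $\s{p} <_{\mathcal{W}} \s{q}$ itself. The tempting route, transitivity through $\s{s_1}$, fails: $\s{s_1}$ need not satisfy $\s{s_1} \le_{\mathcal{W}} \s{q}$ (one sees this already for Lyndon words, where $\s{p} <_{\mathcal{W}} \s{q}$ holds only because a proper prefix is lexicographically smaller -- a fact unavailable for an abstract $\mathcal{W}$-order). To bypass this I would argue by induction on $|\s{q}|$ using the circ-UMFF splitting property \cite{DD-08}: every $\s{q} \in \mathcal{W}$ with $|\s{q}| > 1$ factors as $\s{q} = \s{ab}$ with $\s{a},\s{b} \in \mathcal{W}$ and $\s{a} <_{\mathcal{W}} \s{b}$. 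Splitting $\s{q}$ this way and locating the proper prefix $\s{p}$ relative to the $\s{a}\mid\s{b}$ boundary (inside $\s{a}$, equal to $\s{a}$, or crossing into $\s{b}$), the induction hypothesis applied to $\s{a}$ and to $\s{b}$ handles the non-crossing cases, while the boundary-crossing case is closed using border-freeness of $\mathcal{W}$-elements together with a sandwich inequality $\s{a} <_{\mathcal{W}} \s{ab} <_{\mathcal{W}} \s{b}$. Establishing that sandwich is the genuine obstacle, since -- as a direct check shows -- it does not follow from Theorem~\ref{thm-faxmono} alone but relies on the defining property that a circ-UMFF contains exactly one rotation of each primitive string. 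I expect the bulk of the work, and the only place the conjugacy structure (as opposed to pure factorization monotonicity) is really needed, to sit in that sandwich step.
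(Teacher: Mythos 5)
The paper itself gives no proof of this lemma --- it is imported verbatim from \cite{DD-08} --- so your attempt can only be judged against the standard circ-UMFF argument, and judged that way it has a genuine hole, exactly where you yourself locate it. Your reduction to facts (A) and (B), the reversal duality, and the extraction of $\s{p} <_{\mathcal{W}} \s{s_1}$ from Theorem~\ref{thm-faxmono} are all sound, but the sandwich $\s{a} <_{\mathcal{W}} \s{ab} <_{\mathcal{W}} \s{b}$, on which every case of your induction depends (even the ``non-crossing'' cases need $\s{a} <_{\mathcal{W}} \s{ab}$ to transit from $\s{p} <_{\mathcal{W}} \s{a}$ to $\s{p} <_{\mathcal{W}} \s{q}$), is announced as the crux and then never proved: ``I expect the bulk of the work to sit in that sandwich step'' is a plan, not a proof. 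Worse, the sandwich is itself an instance of statement (A)/(B) for the very string $\s{q} = \s{ab}$ being handled, so it is not available from your induction hypothesis; as written, the induction never gets off the ground.

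What closes the gap --- and shows the splitting/induction apparatus is unnecessary --- is the pair of circ-UMFF facts from \cite{DD-08} that this paper already uses elsewhere: (i) for distinct $\s{u},\s{v} \in \mathcal{W}$ exactly one of $\s{uv}, \s{vu}$ lies in $\mathcal{W}$ (this is what makes $<_{\mathcal{W}}$ total), and (ii) every element of $\mathcal{W}$ of length greater than $1$ is border-free (invoked in the proof of Lemma~\ref{lem-coLynorder}). If $\s{p} \in \mathcal{W}$ is a proper prefix of $\s{q} \in \mathcal{W}$, write $\s{q} = \s{ps}$ with $\s{s} \neq \s{\varepsilon}$; then $\s{qp} = \s{psp}$ has the proper border $\s{p}$, so $\s{qp} \notin \mathcal{W}$ by (ii), and (i) forces $\s{pq} \in \mathcal{W}$, i.e.\ $\s{p} <_{\mathcal{W}} \s{q}$. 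Symmetrically, if $\s{p}$ is a proper suffix of $\s{q} = \s{sp}$, then $\s{pq} = \s{psp}$ is bordered, so $\s{qp} \in \mathcal{W}$ and $\s{q} <_{\mathcal{W}} \s{p}$. Applying these to consecutive prefixes and suffixes of $\s{w}$ yields the entire displayed chain; the same two lines also dispose of your sandwich ($\s{aba}$ and $\s{bab}$ are bordered). So the conjugacy structure you correctly sensed was needed enters only through (i), and the step you flagged as the ``genuine obstacle'' has a one-line solution that makes your Theorem~\ref{thm-faxmono} preliminaries and the induction on $|\s{q}|$ superfluous.
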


Recall that for the Lyndon circ-UMFF $\mathcal{L}$,
this lemma holds more generally
for {\it every} prefix of $\s{w} \in \mathcal{L}$,
no matter whether or not these strings are in $\mathcal{L}$ \cite{Du-83}.
The next lemma shows that if $\s{u} <_{\mathcal{W}} \s{v}$,
then \s{u} is less in $\mathcal{W}$-order than any right extension
of \s{v} that is also in $\mathcal{W}$:

\begin{lemma}
\label{lem-uvw}
Suppose $\s{u} \in \mathcal{W}$ and $\s{v} \in \mathcal{W}$,
where $\mathcal{W}$ is a circ-UMFF.
If $\s{u} <_{\mathcal{W}} \s{v}$,
then for every string \s{w} such that $\s{vw} \in \mathcal{W}$,
$\s{u} <_{\mathcal{W}} \s{vw}$.
\end{lemma}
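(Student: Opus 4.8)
The plan is to obtain the conclusion by chaining the hypothesis $\s{u} <_{\mathcal{W}} \s{v}$ with a second comparison $\s{v} <_{\mathcal{W}} \s{vw}$ read off from the prefix structure of $\s{vw}$, and then invoking transitivity of the $\mathcal{W}$-order. First I would clear the degenerate case $\s{w} = \s{\varepsilon}$, where $\s{vw} = \s{v}$ and the claim is simply the hypothesis; so assume $\s{w} \neq \s{\varepsilon}$, making $\s{v}$ a \emph{proper} prefix of $\s{vw}$.

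The key step is to establish $\s{v} <_{\mathcal{W}} \s{vw}$. Both $\s{v}$ and $\s{vw}$ lie in $\mathcal{W}$, and $\s{v}$ is a proper prefix of $\s{vw}$, so $\s{v}$ is one of the proper prefixes of $\s{vw}$ in $\mathcal{W}$ figuring in Lemma~\ref{lem-nesting}. That lemma places every such proper prefix strictly below $\s{vw}$ in the $\mathcal{W}$-order, giving $\s{v} <_{\mathcal{W}} \s{vw}$ at once. It is important to route through the nesting lemma rather than the bare definition of $<_{\mathcal{W}}$: since the tail $\s{w}$ need not itself belong to $\mathcal{W}$, one cannot exhibit $\s{vw}$ as a concatenation of two $\mathcal{W}$-words with first part $\s{v}$, and so cannot read the comparison off directly.

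I would then combine the two inequalities. With $\s{u} <_{\mathcal{W}} \s{v}$ by hypothesis and $\s{v} <_{\mathcal{W}} \s{vw}$ from the previous step, transitivity of the strict $\mathcal{W}$-order yields $\s{u} <_{\mathcal{W}} \s{vw}$; since $\s{u}, \s{vw} \in \mathcal{W}$ this is precisely the assertion that $\s{u}$ precedes $\s{vw}$ in $\mathcal{W}$-order, equivalently $\s{uvw} \in \mathcal{W}$, as required.

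I expect the main obstacle to be the appeal to transitivity. The definition quoted in the text specifies $<_{\mathcal{W}}$ only for pairs whose concatenation lies in $\mathcal{W}$, so the argument leans on the fact, implicit in calling $<_{\mathcal{W}}$ a strict order relation, that this relation extends to a genuinely transitive (indeed total) order on $\mathcal{W}$. Should one wish to avoid transitivity as a black box, the fallback would be to show directly that the maximal factorization of $\s{uvw}$ is the single factor $\s{uvw}$: assuming a nontrivial factorization and comparing its leading factor against the prefix $\s{uv} \in \mathcal{W}$ and its trailing factor against the suffix $\s{vw} \in \mathcal{W}$, one would aim for a contradiction with Theorem~\ref{thm-faxmono}. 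This route is more delicate, and I would favour the concise nesting-plus-transitivity argument.
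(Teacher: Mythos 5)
Your proof is correct. There is, in fact, nothing in the paper to compare it against: Lemma~\ref{lem-uvw} is stated bare, with no proof and no citation, in a block of the source that is not even compiled (it sits behind the never-activated \verb|\ifj| switch, immediately after Lemma~\ref{lem-nesting}). Your argument --- apply Lemma~\ref{lem-nesting} to the proper prefix $\s{v}$ of $\s{vw} \in \mathcal{W}$ to obtain $\s{v} <_{\mathcal{W}} \s{vw}$, then chain with the hypothesis $\s{u} <_{\mathcal{W}} \s{v}$ using transitivity of the strict $\mathcal{W}$-order --- is exactly the proof that the lemma's placement invites, and both ingredients (the nesting lemma and the fact that $<_{\mathcal{W}}$ is a strict, indeed total, order on a circ-UMFF) are available from \cite{DD-08}; your observation that the bare definitional clause cannot yield $\s{v} <_{\mathcal{W}} \s{vw}$ directly, since $\s{w}$ need not lie in $\mathcal{W}$, is also well taken. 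One small caveat: your closing remark that $\s{u} <_{\mathcal{W}} \s{vw}$ is ``equivalently $\s{uvw} \in \mathcal{W}$'' invokes the converse of the definitional clause (that precedence in $\mathcal{W}$-order forces the concatenation into $\mathcal{W}$), which is a theorem of \cite{DD-08} rather than part of the definition quoted in this paper --- fortunately your proof never needs it, as the lemma's conclusion is only the order comparison itself.
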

\fi

\section{Alphabet ordering}
\label{sec-order}

Suppose the goal is to optimize 
a Lyndon factorization according to minimizing or maximizing the number of factors. For this we consider choosing the order of the letters in the - assumed unordered - alphabet so as to influence the number of factors. To illustrate, consider the string $\s{x} = abcabcdabcaba$. 
If $\Sigma = \{a<b<c<d\}$, then $F_{\mathcal{L}}(\s{x}) = abcabcd \ge abc \ge ab \ge a$. Whereas, if we choose the alphabet ordering to be $\{b<c<a<d\}$, the Lyndon factorization of $\s{x}$ becomes $a \ge bcabcdabcaba$. 

\ifj

Note  that there does not necessarily exist an ordering of the alphabet that can guarantee forming one Lyndon word (a factorization with a single factor) such as in the case of a bordered string (of letters assumed to be unordered). It is relevant to our work that the nature of Duval's Lyndon factoring algorithm \cite{Du-83} is essentially on-line: factors can be identified as input is streamed. 

\fi
Towards this goal we now describe a greedy algorithm for producing small numbers of factors which has performed well in practice on the biological $\{A, C, G,\\ 
T\}$ alphabet -- the experimentation compared results with those for the 4! letter permutations. Suppose the alphabet $\Sigma$ is size $\sigma$, and for a given string $\s{v} = v_1 \ldots v_n$, further suppose that the number of distinct characters in $\s{v}$ is $\delta \le \sigma$; for practical purposes we can assume $\sigma = n$.  

The proposed method requires an extension to a Parikh vector, $p(\s{v})$,
of a finite word \s{v}, where $p(\s{v})$ enumerates the 
occurrences 
of each letter of the alphabet in \s{v}. 
Our modification is that for each distinct letter we will record its individual RLE (run length encoding) exponent pattern -- so the sum of these exponents is the Parikh entry for that letter. We call this the Exponent Parikh vector, or EP vector.
For example, over the 
alphabet $\Sigma = \{b<c<d<f\}$, if $\s{v} = bbbffbbcf$ then $p (\s{v}) = [5,1,0,3]$; whereas, for the  EP vector we record the strings [(32), (21), (1)]. So usually the letters are listed in alphabetical order with a Parikh vector while in the EP case we are listing them in order of first occurrence.  

An overview of the method is that we use the fact that in a Lyndon factorization 
the first factor is the longest prefix which is a Lyndon word. Then the heuristic is that the left-most letter, $\alpha$ say, in the given string whose exponents form a Lyndon word with the minimal number of factors is chosen as the least letter in the alphabet ordering. In order to respect the Lyndon property for letters via their exponents, we require the exponent integer alphabet to be inverted, that is let $\bar{\Sigma} = \{ \ldots 3 < 2 < 1 \}$. Next, the algorithm attempts to assign order to letters in the 
substrings between runs of $\alpha$ characters, where these substrings are denoted $X_{i}$  -- if it gets stuck it tries backtracking.  

So note that with this algorithm the required property for the exponents of $\alpha$ is that they form a Lyndon word over $\bar{\Sigma}$ and in conjunction a requirement for assigning letters to the $ X_{i}$ substrings is that the ordering will be cycle-free. The algorithm can be modified to generate large numbers of factors which involves assigning different letters to be in decreasing order.

\subsection{Greedy algorithm}
\label{greedy-alg}

The pseudocode in Algorithm \ref{alg-Lynfactors} greedily assigns an alphabet order to letters. 

\begin{algorithm}
\caption{Order the alphabet so as to reduce the number of factors in a Lyndon factorization.}
\label{alg-Lynfactors}

\DontPrintSemicolon

With a linear scan record the Exponent Parikh (EP) vector of the string for $\delta$ distinct letters -- $O(n)$ \;

Compute $F_{\mathcal{L}}(\s{p_r})$ of each exponent string \s{p_r} over $\bar{\Sigma}$ and record its number of factors -- $O(n)$ \;

\While { bool = true } {

Select the next leftmost $\s{p_r}$, $\s{p_i}$ say, with minimal number of factors, $t$ say -- $O(n)$ \; 

\tcp{assign alphabet order to the $t$ factors of $F_{\mathcal{L}}(\s{p_i}) = \s{f_1} \ge \cdots \ge \s{f_t}$}  

\tcp{where $\s{f_j} = \alpha^{j_1} X_{1} \alpha^{j_2} X_{2} \cdots \alpha^{j_q} X_{q}$, and $\alpha \notin X_{h}$, $1 \le h \le q$, with $j_{1} j_{2} \ldots j_{q} \in \mathcal{L}$ over $\bar{\Sigma}$}



$\alpha = \lambda_1$  \tcp*[f]{assign first letter to be minimal in $\Sigma$; if $q=1$ assign each new letter in $X_1$ successively in $\Sigma$}

\For { $h = 2$ to $q$ }  
{
    \If (\tcp*[f]{same exponents so assign alphabet in order to letters in $X_1$ and $X_h$ substrings})  { $j_h = j_1$ } 
    { 
         $d \gets 1$ \;
         \While {$X_{1}[d] = X_{h}[d]$ } {
            assign each new letter successively in $\Sigma$; d++; \; 
         }

         \uIf { $X_{1}[d] = \alpha \And X_{h}[d] \neq \alpha$ } 
              { assign $X_{h}[d]$ to be next successive letter \; }
         
         \uElseIf { $X_{h}[d] = \alpha \And X_{1}[d] \neq \alpha$ } 
              { bool = false    \tcp*[r] {not Lyndon} }
         \uElseIf(\tcp*[f]{$X_{1}[d] \neq X_{h}[d]$ } ){assignment would not make inconsistency}  
              {assign $X_{h}[d] > X_{1}[d]$ \; }
             
         \Else 
            {bool = false  \tcp*[r] {inconsistent} }

    }
  
}

}

\uIf {bool} {
  {attempt assignment process for the $t$ factors of $F_{\mathcal{L}}(\s{p_i})$}  
}

\uIf {bool} {
  \If(\tcp*[f]{$\alpha \notin \s{u}$ }) { string prefix \s{u} (prior to \s{f_1}) is non-empty } 
     { repeat process on \s{u} starting with next successive letters in $\Sigma$ \; \tcp*[h]{lookup EP vector} 

{complete assignment of any remaining letters;}

{if letters in prefix \s{u} do not occur in suffix then re-assign all letters starting from prefix}}

  }
\Else {  
   {arbitrarily choose next leftmost ${p_r}$ with minimal number of factors and attempt new assignment \;}
   }

\end{algorithm}

\noindent The following example illustrates how backtracking can
lead the algorithm from an inconsistent ordering to a successful assignment and associated factorization.

\ifj
\begin{example}
\label{ex-bcktrck} 
Assume $\Sigma = \{a,b,c,d\}$ is an unordered alphabet and let
$\s{x} = a^2 bdc a^2 cdc a^2 bdb a^1 b a^2 {(bc)}^k a^2 {(bc)}^k a^2 {(bc)}^k a^1b$
where $k$ is $O(n)$. Only the letter $a$ has an exponent greater than 1 and $F_{\mathcal{L}}(EP(a)) = 2221 \ge 2221$ with $F_{\mathcal{L}}(\s{p_1}) = F_{\mathcal{L}}(\s{p_2}) = 2221$. 
Choosing the leftmost $\s{p_r}$, that is $\s{p_1}$, yields $a<b,c,d$ and $b<c$, hence the first Lyndon word identified is $aabdcaacdc$. However, by backtracking to $\s{p_2}$ the $O(n)$ length Lyndon word $a^2 {(bc)}^k a^2 {(bc)}^k a^2 {(bc)}^k a^1b$ is obtained.
\end{example}
\fi

\begin{example}
\label{ex-bcktrck} 
Assume $\Sigma = \{a,b,c,d\}$ and $\s{x} 
= a^2 bdc a^2 cd a^2 bdb a^1 b  a^2 bc a^2 c a^2 c a^1 b$. Only the letter $a$ has an exponent greater than 1 and $F_{\mathcal{L}}(EP(a)) = 2221 \ge 2221$ with $F_{\mathcal{L}}(\s{p_1}) = F_{\mathcal{L}}(\s{p_2}) = 2221$. 
Choosing 
$\s{p_1}$ causes inconsistency and similarly $\s{p_2}$.
So the algorithm then backtracks through the EP array and chooses the letter with the least number of factors (albeit singletons) -- the result is $\Sigma = \{d < c < a < b\}$ with $F_{\mathcal{L}}(\s{x}) = aab \ge dc aacd aabdb ab  aabc aac aac ab$.
\end{example}

\section{Applications}
\label{apps}

\par In many cases, such as natural language text processing, the order of the alphabet is prescribed, and hence the Lyndon factors of an input text cannot be manipulated. 
On the other hand, bioinformatics alphabets have no inherent ordering suggested by biological systems and applications involving Lyndon words, such as the Burrows-Wheeler transform (BWT), will allow for useful manipulation of the Lyndon factors. The co-BWT is the regular BWT of the reversed string, or the BWT with co-lexorder, which has been applied in the highly successful Bowtie sequence alignment program \cite{LTPS-09}. Integral with the BWT transform is the computation of suffix arrays via induced suffix-sorting. We also propose that pattern matching can be implemented with the Lyndon factorization in big data applications, such as sequence alignment, and further enhanced by fortuitous arrangements of the alphabet.

\par We refer to \cite{MRRS14} where a new method is presented for constructing the suffix array of a text by  using its Lyndon factorization advantageously. Partitioning the text according to its Lyndon properties allows tackling the problem in local portions of the text, local suffixes, prior to extending the solution globally, to achieve the suffix array of the entire text -- the local portions are determined by the Lyndon factors. The algorithm iteratively finds a Lyndon factor, constructs its suffix array and merges the new local suffixes with the current alphabetical list of suffixes. It is stated that the algorithm's time complexity is not competitive for the construction of the overall suffix array -- we propose that reducing the number of factors by alphabet ordering will improve the efficiency in practice. This is worthwhile as their algorithm offers flexibility by easily adapting to different implementations: 
online, external \& internal memory, and parallel.

\section{Experimentation: Factorization of DNA strings}

We chose as an example the 120 prokaryotic reference genomes from RefSeq\footnote{\url{https://ww.ncbi.nlm.nih.gov/refseq/about/prokaryotes}}, to investigate the results of the algorithm in practice\footnote{Code available at \url{https://github.com/amandaclare/lyndon-factors}}. Most of these genomes are provided as a single contiguous sequence but some of them have additional smaller pieces representing plasmids or other information. The longest contiguous sequence was chosen for each genome in these cases, and smaller pieces were discarded. The retained sequences ranged from 640,681 letters to 10,236,715 in length, with a mean of 3,629,792.

In order to determine how often our greedy algorithm found a good or optimal alphabet reordering in practice, we calculated the Lyndon factorizations resulting from all possible ($4! = 24$) alphabet reorderings of the characters A, C, G and T across this collection of genomes. The improvement that could potentially be made to the factorization by reordering is substantial, with at least a halving of the number of factors in most cases and an improvement reducing 25 factors down to 3 
in one case. For each genome we ranked the results of all possible reorderings by the number of factors produced and 
compared the reordering produced by the algorithm. The algorithm found the optimal reordering for 
21/120 genomes and the second-most optimal in 
31/120 genomes.

The EP vector is used to determine the least letter in the reordering. If the first choice leads to inconsistency (and hence small factors), backtracking to inspect other possible choices can be helpful. However, in many cases, the initial choice is still better than the next possible consistent solution found via backtracking. Without backtracking, the algorithm found 23/120 optimal orderings and a further second-most optimal orderings in 31/120 genomes.


\section{Research Problems}
\label{subsec-probs}

We propose the following research directions:

\begin{itemize}
\label{item-problems}
\item As a complementary structure to the Lyndon array 
we introduce and propose studies of the Lyndon factorization array. 
The \itbf{Lyndon array} $\s{\lambda} = \s{\lambda}_{\s{x}}[1..n]$
of a given $\s{x} = \s{x}[1..n]$
gives at each position $i$ the length
of the longest 
Lyndon word starting at $i$.
So we define the \itbf {Lyndon factorization array} $ \s{F} = \s{F}_{\s{x}}[1..n]$ of $\s{x}$ 
to give at each position $i$ the number of factors in the 
Lyndon factorization starting at $i$.

\ifj

We extend an example from 
to include $\s{F}_{\s{x}}$:

\begin{equation}
\label{ex1}
\begin{array}{rccccccccc}
\scriptstyle 1 & \scriptstyle 2 & \scriptstyle 3 & \scriptstyle 4 & \scriptstyle 5 & \scriptstyle 6 & \scriptstyle 7 & \scriptstyle 8 & \scriptstyle 9 & \scriptstyle 10 \\ 
\s{x} = a & b & a & a & b & a & b & a & a & b \\ 
\s{\lambda_{\s{x}}} = 2 & 1 & 5 & 2 & 1 & 2 & 1 & 3 & 2 & 1 \\ 
\L_{\s{x}} = 2 & 2 & 7 & 5 & 5 & 7 & 7 & 10 & 10 & 10\\
\s{F}_{\s{x}} = 3 & 3 & 2 & 3 & 3 & 2 & 2 & 1 & 1 & 1
\end{array}
\end{equation}

Linear-time algorithms are proposed in 
for the construction of a Lyndon array and additionally for testing if a given integer array (such as $\lambda$ and $\L$) is a Lyndon array of some string. The lexicographic order of Lyndon factorizations of strings is defined in 
It is natural to consider analogous questions and algorithms for the Lyndon factorization array along with applications additional to that suggested in this paper which is related to bioinformatics.

\fi

\item The greedy algorithm presented here does not necessarily produce an optimal solution hence natural problems are to design algorithms 
for Lyndon factorizations with a guaranteed minimal / maximal number of Lyndon factors. The optimization problem can be stated for any other circ-UMFF.

\item Using Duval's Lyndon factorization algorithm \cite{Du-83} as a benchmark, modify the alphabet order so as to increase/decrease the number of factors.



\item Theorem \ref{thm-faxmono} supports the following problem from \cite{DDS-09}:
Given a string $\s{u}$, determine the circ-UMFF(s) which factorizes $\s{u}$ into the maximal or minimal number of factors -- this can be combined with alphabet ordering.

\end{itemize}

\noindent{\bf Acknowledgements} 


     \includegraphics[width=0.19 \linewidth] {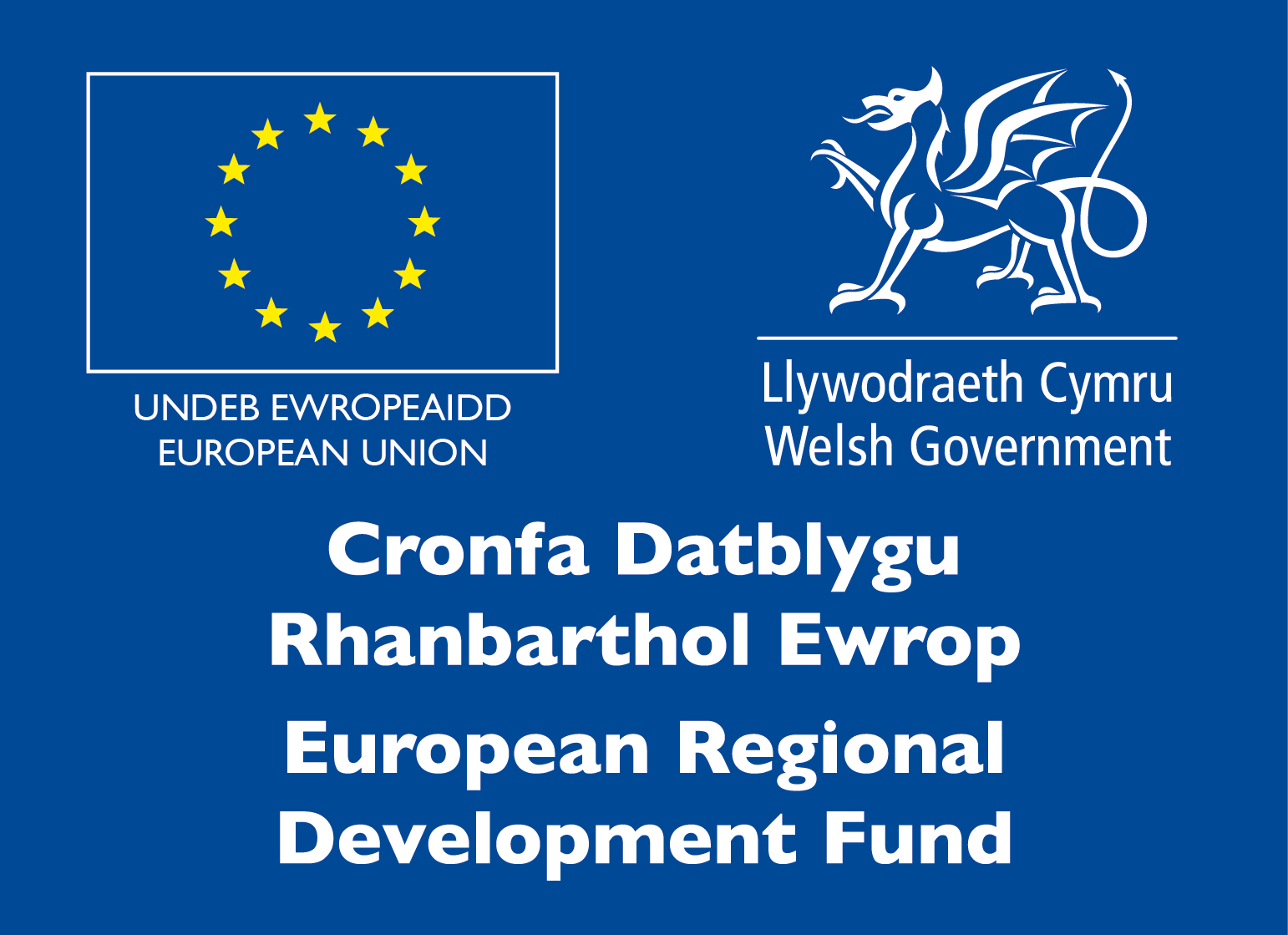}

\section*{References}

\bibliography{mybibfile}

\end{document}